\documentclass[10pt,conference]{IEEEtran}
\usepackage[cmex10]{amsmath}
\usepackage{amssymb}
\usepackage{array}
\usepackage{url}
\interdisplaylinepenalty=2500
\hyphenation{op-tical net-works semi-conduc-tor}
\newtheorem{theorem}{Theorem}[section]

\newtheorem{proposition}{Proposition}

\newtheorem{definition}[theorem]{Definition}


\newcommand{\Ac}{\ensuremath{\mathcal  A}}
\newcommand{\Bc}{\ensuremath{\mathcal  B}}
\newcommand{\Cc}{\ensuremath{\mathcal  C}}

\newcommand{\Fc}{\ensuremath{\mathcal  F}}
\newcommand{\Gc}{\ensuremath{\mathcal  G}}
\newcommand{\Pc}{\ensuremath{\mathcal  P}}
\newcommand{\Sc}{\ensuremath{\mathcal  S}}


\newcommand{\xb}{\ensuremath{\mathbf{x}}}
\newcommand{\yb}{\ensuremath{\mathbf{y}}}

\newcommand{\Pb}{\ensuremath{\mathbf{P}}}


\newcommand{\Dr}{\ensuremath{\mathrm{D}}}
\newcommand{\dr}{\ensuremath{\mathrm{d}}}
\newcommand{\Ir}{\ensuremath{\mathrm{I}}}
\newcommand{\Nr}{\ensuremath{\mathrm{N}}}


\newcommand{\bF}{\ensuremath{\mathbb{F}}}
\newcommand{\bP}{\ensuremath{\mathbb{P}}}


\newcounter{rno}
\setcounter{rno}{1}

\newenvironment{rlist}{
\begin{list}{{\normalfont(\roman{rno})}}
{
\setlength{\topsep}{0.25ex}
\usecounter{rno}
\setlength{\topsep}{0.75ex}
\setlength{\labelwidth}{3ex}
\setlength{\leftmargin}{4ex}
\setlength{\labelsep}{1ex}	
\setlength{\rightmargin}{0ex}
\setlength{\itemindent}{0ex}
\setlength{\parsep}{0ex}
\setlength{\itemsep}{0.5ex plus0.2ex minus0.1ex}
}
}{\end{list}}

\newcounter{arno}
\setcounter{arno}{1}
\newenvironment{arlist}{
\begin{list}{{\normalfont(\arabic{arno})\hfill}}
{
\usecounter{arno}
\setlength{\topsep}{0.75ex}
\setlength{\labelwidth}{3ex}
\setlength{\leftmargin}{4ex}
\setlength{\labelsep}{1ex}	
\setlength{\rightmargin}{0ex}
\setlength{\itemindent}{0ex}
\setlength{\parsep}{0ex}
\setlength{\itemsep}{0.5ex plus0.2ex minus0.1ex}
}
}{\end{list}}

\title{Subspace Codes for Random Networks Based on Pl\"{u}cker coordinates and Schubert Cells}
\author{\authorblockN{Anirban~Ghatak}
\authorblockA{Department of ECE\\
Indian Institute of Science, Bangalore, India-560012 \\
Email: aghatak@ece.iisc.ernet.in}}

\begin{document}

\maketitle

\begin{abstract}
The construction of codes in the projective space for error control in random networks has been the focus of recent research. The Pl\"{u}cker coordinate description of subspaces has been discussed in the context of constant dimension codes, as well as the Schubert cell description of certain code parameters. In this paper we use this classical tool to reformulate some standard constructions of constant dimension codes and give a unified framework. We present a general method of constructing non-constant dimension subspace codes with respect to minimum subspace distance or minimum injection distance as the union of constant dimension subspace codes restricted to selected Schubert cells. The selection of these Schubert cells is based on the subset distance of tuples corresponding to the Pl\"{u}cker coordinate matrices associated with the subspaces contained in the respective Schubert cells. In this context, we show that a recent construction of non-constant dimension Ferrers-diagram rank-metric subspace codes is subsumed in our framework. 
\end{abstract}
\section{Introduction}
A linear multicast network coding model is defined as follows: A source node transmits $ n $ packets, each an $ m $-ary symbol over $ {\bF}_q $; each node in the network transmits $ {\bF}_q $-linear combinations of the incoming packets at that node, and at any destination node the received packets may be represented as the rows of an $N \times m$ matrix $ Y = AX + DZ $, where the matrices $ X, Z $ are of respective dimensions $ n \times m, t \times m $ over $ {\bF}_q $. The matrix $ X $ has $ n $ source packets as its rows, $ A $ is the transfer matrix of the network, the rows of $ Z $ are the error packets, which may be random or introduced by an adversary, and $ D $ the transfer matrix with respect to these error packets. In \emph{non-coherent} or \emph{random} network coding, it is generally assumed that the source and the destination nodes have no knowledge of the network topology - the only assumption is that the number of error packets, i.e. the parameter $ t $, is bounded. \\
The motivation for construction of subspace codes for random networks over the projective space came from the seminal work of K\"{o}tter and Kschischang \cite{KK}, where the problem of error-and-erasure correction in random linear networks was translated into that of transmission and recovery of linear subspaces of a projective space. Subsequent research (\cite{SKK}, \cite{GaYan}) constructed subspace codes, with \emph{subspace distance} (\cite{KK}) as the metric, based on rank-metric codes. A very significant development in the construction of constant dimension subspace codes was the use of Ferrers diagram representation of rank-metric codes which fit into a row-reduced echelon form representation of subspaces in a Grassmannian (\cite{ES}), which was further employed in conjunction with a lexicographic ordering of subspaces (\cite{SE}). The above technique and its variants (\cite{EF2}) have resulted in some of the best-known constant dimension subspace codes till date. Other important constructions of constant dimension subspace codes include those presented in \cite{KoKu}, \cite{MGR}, \cite{Matroid}.\\
Restriction of codeword subspaces to a particular Grassmannian yields constant dimension subspace codes; a natural generalization was to attempt the construction of non-constant dimension subspace codes over the entire projective space. In this context, a new metric, namely the \emph{injection distance}, was defined in \cite{SK}, which was shown in some cases (for instance, a worst-case adversarial model) to give a more precise measure of the performance of a non-constant dimension code than the subspace distance metric. In \cite{KhK}, subspace codes (non-constant dimension) were constructed for the injection metric based on the methods in \cite{ES}. Bounds on the size of projective space codes, analogous to classical coding theoretic ones, and some explicit constructions of non-constant dimension codes were presented in \cite{EV}.\\
A new paradigm in random network coding was introduced by the so-called orbit codes \cite{Orbit}, which generalizes some of the results in \cite{KoKu} and describes constant dimension subspace codes as orbits of the action of suitable subgroups of the general linear group on a Grassmannian. Subsequent work (\cite{TRo, Cyclor1, Cyclor2}) characterized orbit codes which result from the action of cyclic subgroups. Recently, in \cite{RoTP}, the classical tool of Pl\"{u}cker coordinates was employed to compute the so-called `Pl\"{u}cker orbits' associated with an orbit code and the defining equations of Schubert varieties were used in the description of constant dimension subspace codes.\\
The contributions of this paper are listed below.
\begin{arlist}
\item We have used some classical results (\cite{Kleiman}) to provide a unified framework for some standard constructions of subspace codes (e.g. in \cite{KK}, \cite{SKK}, \cite{ES}) in terms of the Pl\"{u}cker coordinates of the constructed subspaces. In particular, we show that a strong correspondence exists: the generating matrices in both the so-called `lifting construction' and its generalization correspond exactly to the matrix of the independent Pl\"{u}cker coordinates with appropriate constraints on the choice of indices.
\item We show that the characteristic representation of a Schubert cell in a projective space in terms of the indices corresponding to a non-zero Pl\"{u}cker coordinate can be used as an ``identifying vector''(\cite{ES}, \cite{SE}, \cite{KhK}) for subspaces in that Schubert cell. In particular, we show that both the subspace distance and injection distance among the subspaces in a projective space can be bounded using the subset distance among the defining index sets of corresponding Schubert cells.
\item We give a construction of non-constant dimension subspace codes which is an extension of a construction in \cite{Orbit} to the non-constant dimension case - for both subspace distance and injection distance. The codes are described as unions of Schubert cells selected to guarantee a minimum subspace or injection distance. Moreover, we show that the Khaleghi-Kschischang construction \cite{KhK} for non-constant dimension Ferrers-diagram rank-metric subspace codes is subsumed in our proposed construction if the constant dimension subcode within each Schubert cell is a suitable rank-metric code.
\end{arlist}
The organisation of this paper is as follows: Section \ref{sec:math} is a revision of pertinent mathematical results regarding the Pl\"{u}cker coordinate description of projective subspaces and Schubert cells. In Section \ref{sec:Lift}, we reformulate standard subspace code constructions in the language of Pl\"{u}cker coordinates and give a unified framework. The next section (Section \ref{sec:Code}) gives the general construction of non-constant dimension subspace codes. We conclude with a summary of the results and a discussion on the directions of ongoing and future research in this context.
\section{Preliminary Concepts}\label{sec:math}
In this section we briefly review and organize some mathematical concepts which will be used in interpreting recent results in random network code constructions and lead to the formulation of new coding schemes.
\subsection{Projective Space over a Field}
We begin with the definition of a projective space over an arbitrary field.
\begin{definition}
\cite{Samuel} The \emph{projective space} associated with a vector space $V$ over a field $K$ is the set ${\bP}(V)$ of lines in $V$.
\end{definition}
Moreover, the projective space ${\bP}(V)$ is associated with the $K$-vector space $V$ in the following canonical way \cite{Samuel}:\\
Consider the equivalence relation $\sim$ on the set $V^* := V \setminus \{0\}$, where, for ${\xb},{\yb} \in V^{*},\, {\xb} \sim {\yb} $ if and only if ${\yb} = a{\xb},\, a \in K, \, a \neq 0 $. Then the canonical map $\Pc: V^* \longmapsto {\bP}(V)$ associates each vector $\xb$ with the projective point $K {\xb}$.\\
The dimension of ${\bP}(V)$ is defined as ${\rm{Dim}}({\bP}(V)) := {\rm{Dim}}(V) -1$; ${\bP}(0)$ is empty, and its dimension is $-1$. Therefore, a $d$-dimensional subspace or a $d$-space of the $n$-dimensional projective space ${\bP}^{n}(K)$ over a field $K$ is a set of points whose representative vectors alongwith the zero vector, form a $d+1$-dimensional subspace of the $n+1$-dimensional $K$-vector space $V$, where ${\bP}^{n}(K) = {\Pc}(V)$ \cite{Hirschfeld}. 
\subsection{The Pl\"{u}cker Coordinates of Subspaces}
A $d$-dimensional linear space $L_d$ in ${\bP}^n$ is a set of points $P = (p(0), \cdots,p(n))\,\in\,{\bP}^n$ whose coordinates $p(j)$ satisfy a system of $n-d$ independent linear equations:
\begin{equation*}
\displaystyle{ \sum^{n}_{j=0}\, \alpha_{ij}\, p(j)\, = \,0;\,\, i= 1,\cdots,n-d.}
\end{equation*}
There exist $d+1$ points $P_i = (p_i(0), \cdots,p_i(n))\,\,; i = 0,\,\cdots,\, d,\, $ which span the $d$-space $L_d$. Consider the $(d+1)\times (n+1)$ matrix $\Pb = (p_i(j))$; denote as $\Dr(j_0,\cdots,j_d)$ the determinant of the $(d+1)\times (d+1)$ submatrix formed by those columns of $\Pb$ indexed by the sequence of integers $j_0 \leq \cdots \leq j_d,\, 0\leq j_k \leq n$. It follows that at least one of the $\Nr +1$ such determinants, where $\Nr := \binom {n+1}{d+1} - 1$, must be non-zero. Written in a lexicographic order on the column indices, $(\cdots,\Dr(j_0,\cdots,j_d),\cdots )$ determines a point in ${\bP}^{\Nr}$; these determinants constitute the \emph{Pl\"{u}cker coordinates} of the $d$-dimensional linear (projective) subspace $L_d$.
\begin{theorem}\label{th:QR}
({\rm{\cite{Kleiman, HoPe}}})\, There exists a natural bijective correspondence between the $d$-spaces in $\bP^n$ and the points of ${\bP}^{\Nr}$ whose Pl\"{u}cker coordinates satisfy the following relations:
\begin{equation}\label{eq:qr}
\sum^{d+1}_{\alpha =0} (-1)^{\alpha}\Dr(j_0,\cdots,j_{d-1}, k_{\alpha})\Dr(k_0,\cdots,\check{k_{\alpha}},\cdots,k_{d+1}) = 0.
\end{equation} 
where $\check{k_{\alpha}}$ denotes omission of $ k_{\alpha} $ and $0 \leq j_{\beta},\, k_{\gamma} \leq n$.
\end{theorem}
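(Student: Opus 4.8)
The plan is to establish the two halves of the asserted correspondence. On one side, the Pl\"{u}cker map must be a well-defined injection from the set of $d$-spaces of $\bP^n$ into the locus $\Omega \subseteq \bP^{\Nr}$ cut out by \eqref{eq:qr}; on the other, every point of $\Omega$ must arise this way. For well-definedness and injectivity I would argue as follows. If $\Pb$ and $\Pb'$ are matrices whose rows span the same $d$-space $L_d$, then $\Pb' = g\Pb$ for some $g\in\mathrm{GL}_{d+1}(K)$, so every maximal minor is multiplied by $\det g$; hence the image in $\bP^{\Nr}$ depends only on $L_d$, and it is a genuine point because $\Pb$ has rank $d+1$, forcing some $\Dr(j_0,\dots,j_d)\neq 0$. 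For injectivity, observe that a point $P=(p(0),\dots,p(n))$ lies on $L_d$ if and only if the $(d+2)\times(n+1)$ matrix obtained by adjoining the row $P$ to $\Pb$ has rank $d+1$, i.e. all of its $(d+2)\times(d+2)$ minors vanish; expanding each such minor along the adjoined row writes it as a linear form in the $p(j)$ whose coefficients are, up to sign, the Pl\"{u}cker coordinates of $L_d$. Thus the Pl\"{u}cker coordinates determine the linear system defining $L_d$, so two $d$-spaces with proportional coordinates coincide.

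Next I would verify that the coordinates of any $d$-space satisfy \eqref{eq:qr}. Writing $\Pb_j\in K^{d+1}$ for the $j$-th column of $\Pb$, fix $j_0,\dots,j_{d-1}$ and consider
\[
F(y_0,\dots,y_{d+1}) \;=\; \sum_{\alpha=0}^{d+1}(-1)^{\alpha}\,
\det\!\big[\Pb_{j_0}\,|\,\cdots\,|\,\Pb_{j_{d-1}}\,|\,y_{\alpha}\big]\;
\det\!\big[y_0\,|\,\cdots\,|\,\widehat{y_{\alpha}}\,|\,\cdots\,|\,y_{d+1}\big]
\]
as a function on $(K^{d+1})^{d+2}$. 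It is visibly multilinear in $y_0,\dots,y_{d+1}$, and a short computation shows it is alternating: if $y_a=y_b$ with $a\neq b$, all terms with $\alpha\notin\{a,b\}$ vanish, and the remaining two cancel after re-sorting the columns of the second determinant. Since $F$ is an alternating multilinear function of $d+2$ vectors in a $(d+1)$-dimensional space it is identically zero, and the substitution $y_\alpha=\Pb_{k_\alpha}$ yields exactly \eqref{eq:qr}.

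The substantive direction is surjectivity onto $\Omega$, and this is where I expect the real work to lie. Given $0\neq\pi=(\pi(j_0,\dots,j_d))\in\bP^{\Nr}$ satisfying \eqref{eq:qr}, permute the ambient coordinates of $\bP^n$ so that the coordinate $\pi(0,1,\dots,d)$ is nonzero and scale so that $\pi(0,\dots,d)=1$ (the remaining cases follow by this symmetry). Build the $(d+1)\times(n+1)$ matrix $\Pb$ in reduced row-echelon form with pivot columns $0,\dots,d$ by putting, in row $i$ and column $j>d$, the entry $\pm\,\pi(0,\dots,\widehat{i},\dots,d,j)$. By construction the maximal minors of $\Pb$ indexed by sets containing at least $d$ of the pivot columns already coincide with the corresponding entries of $\pi$. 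It then remains to show $\Dr(j_0,\dots,j_d)=\pi(j_0,\dots,j_d)$ for every index set, which I would do by induction on the number of non-pivot indices in $\{j_0<\cdots<j_d\}$: at each step one selects a Pl\"{u}cker relation \eqref{eq:qr} for $\pi$ in which exactly one coordinate is still unknown while the rest are known minors of $\Pb$, and compares it with the same relation applied to the genuine minors of $\Pb$ (valid by the previous paragraph); matching the two identities term by term pins down the unknown coordinate. The main obstacle is precisely this bookkeeping --- arranging, at each stage of the induction, a relation with a single unknown coordinate and checking that all signs line up --- after which $\Pb$ spans a $d$-space whose Pl\"{u}cker point is $\pi$, and the correspondence is complete.
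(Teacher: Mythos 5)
Your proposal is correct, but note that the paper does not actually prove Theorem \ref{th:QR}: it cites \cite{Kleiman, HoPe}, remarks that the forward verification is easy, and packages the substance of the converse as Proposition \ref{prop:id}. What you have written is essentially a reconstruction of that classical Kleiman--Laksov argument. Your injectivity step (expanding the $(d+2)\times(d+2)$ minors of the row-augmented matrix along the adjoined row to recover the linear equations of $L_d$ from its Pl\"{u}cker coordinates), your derivation of \eqref{eq:qr} from the vanishing of an alternating multilinear function of $d+2$ vectors in a $(d+1)$-dimensional space, and your surjectivity argument (normalize $\Dr(0,\dots,d)=1$, build the echelon matrix whose entries are the coordinates with exactly one non-pivot index, then propagate equality of all remaining coordinates through the quadratic relations) are exactly the cited proof; the surjectivity half is precisely Proposition \ref{prop:id} restricted to the chart $\Dr(k_0,\dots,k_d)\neq 0$, with the chart reduction justified because the statement is equivariant under permutation of the ambient coordinates. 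The bookkeeping you flag as the main obstacle is routine and closes as follows: for an index set with at least two non-pivot entries, apply \eqref{eq:qr} with $(j_0,\dots,j_{d-1})$ equal to the set minus one non-pivot index $j$ and with $(k_0,\dots,k_{d+1})=(0,1,\dots,d,j)$; the unknown coordinate then occurs multiplied by $\Dr(0,\dots,d)=1$, while every other product involves coordinates with strictly fewer non-pivot indices, so the induction (run identically for $\pi$ and for the genuine minors of $\Pb$) determines the unknown uniquely. The paper's citation-plus-Proposition-\ref{prop:id} route buys brevity and the explicit entry formula $p_i(j)$ used in later sections; your route buys a self-contained proof of the same statement.
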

It is easily verified that the Pl\"{u}cker coordinates of any $d$-space in $\bP^n$ indeed satisfy the above relations. Proving the converse, i.e. any point of ${\bP}^{\Nr}$ whose coordinates satisfy the above relations corresponds to a unique $d$-plane in $\bP^n$, offers some interesting insights \cite{Kleiman}. For instance, it can be shown that any $d$-subspace of the projective space $\bP^n$ has an associated matrix representation having the $(d+1)\times(d+1)$ identity matrix as a sub-matrix. The following proposition spells this out in detail.
\begin{proposition}\label{prop:id}{\rm{\cite{Kleiman}}}\,
There is a natural bijective correspondence between the set of points of $\,{\bP}^{\Nr}\,$ whose coordinates satisfy the quadratic relations in {\rm Theorem} \ref{th:QR}, with the condition $\Dr(k_0,\cdots,k_d)\neq 0$, and the affine $(d+1)(n-d)$ space of $(d+1)\times(n+1)$ matrices $(p_i(j)), \, i=0,\cdots,d,\, j =0,\cdots,n$, such that the $(d+1)\times(d+1)$ sub-matrix $p_i(k_{\alpha}),\, i=0,\cdots,d,\, \alpha =0,\cdots,d $ is the identity. Moreover, such a matrix $p_i(j)$ corresponds to the point of $\,{\bP}^{\Nr}$ with coordinates $\Dr(j_0,\cdots,j_d) = {\rm{Det}}(p_i(j_{\beta}))$; such a point of $\,{\bP}^{\Nr}$ corresponds to a $(d+1)\times(n+1)$ matrix with entries:
\begin{equation} 
p_i(j) = \Dr(k_0,\cdots,k_{i-1},j,k_{i+1},\cdots,k_d)/ \Dr(k_0,\cdots,k_d).
\end{equation}
\end{proposition}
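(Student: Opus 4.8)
The plan is to realise the stated correspondence as a pair of mutually inverse maps $\Phi$ and $\Psi$, and to isolate all of the genuine content in a single consequence of the Pl\"ucker relations. Write $K=(k_0,\dots,k_d)$ and $\Delta=\Dr(k_0,\dots,k_d)$. Let $\Phi$ send a $(d+1)\times(n+1)$ matrix $(p_i(j))$ whose columns indexed by $K$ form the identity to the point of $\bP^\Nr$ with coordinates $\Dr(j_0,\dots,j_d):=\mathrm{Det}(p_i(j_\beta))$, and let $\Psi$ send a point of $\bP^\Nr$ satisfying \eqref{eq:qr} with $\Delta\neq0$ to the matrix with entries $p_i(j)=\Dr(k_0,\dots,k_{i-1},j,k_{i+1},\dots,k_d)/\Delta$. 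First I would check well-definedness. A matrix in the source of $\Phi$ has full row rank $d+1$, so its maximal minors are honest Pl\"ucker coordinates of the $d$-space spanned by its rows and therefore satisfy \eqref{eq:qr} by the elementary half of Theorem \ref{th:QR}, while $\Delta=\mathrm{Det}(I)=1\neq0$; and the formula defining $\Psi$ is unchanged under a simultaneous rescaling of all coordinates, so it depends only on the projective point, with $p_i(k_\alpha)=\delta_{i\alpha}$ because a Pl\"ucker coordinate with a repeated index vanishes, so $\Psi$ lands in the affine space of $(d+1)\times(n+1)$ matrices carrying the identity on the $K$-columns, of dimension $(d+1)(n+1)-(d+1)^2=(d+1)(n-d)$.

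That $\Psi\circ\Phi=\mathrm{id}$ is immediate: for a matrix $(p_i(j))$ already carrying the identity on columns $K$, the quantity $\Dr(k_0,\dots,k_{i-1},j,k_{i+1},\dots,k_d)$ is the determinant obtained from the identity matrix by replacing its $i$-th column with the $j$-th column of $(p_i(j))$, which is exactly $p_i(j)$, while $\Delta=1$. For $\Phi\circ\Psi=\mathrm{id}$ the plan is as follows. Normalise the chosen point so that $\Delta=1$. From \eqref{eq:qr}, with its $d$-index tuple an arbitrary $(a_1,\dots,a_d)$ and its $(d+2)$-index tuple $(j,k_0,\dots,k_d)$, derive the one-step ``straightening'' identity $\Dr(a_1,\dots,a_d,j)=\sum_{l=0}^{d}p_l(j)\,\Dr(a_1,\dots,a_d,k_l)$. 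Iterate it, using antisymmetry of $\Dr$ to bring each of the indices of $(j_0,\dots,j_d)$ in turn into the last slot, to get $\Dr(j_0,\dots,j_d)=\sum_{l_0,\dots,l_d}\big(\prod_{s=0}^{d}p_{l_s}(j_s)\big)\,\Dr(k_{l_0},\dots,k_{l_d})$. Finally observe that $\Dr(k_{l_0},\dots,k_{l_d})$ is the sign of the permutation $(l_0,\dots,l_d)$ of $(0,\dots,d)$ when the $l_s$ are distinct and is $0$ otherwise, so the right-hand side collapses exactly to $\mathrm{Det}(p_i(j_\beta))$. As a by-product this argument also yields the converse half of Theorem \ref{th:QR} in the chart $\Delta\neq0$: the matrix $\Psi$ produces has rank $d+1$ and the point is the tuple of its maximal minors.

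The step I expect to be the main obstacle is exactly the derivation and the iteration of the straightening identity: the single instance of \eqref{eq:qr} has to be reorganised into the stated form once its $d+2$ free indices are chosen to be $(j,k_0,\dots,k_d)$, and the repeated application then demands careful tracking of the antisymmetry signs so that the coefficient of $\prod_s p_{l_s}(j_s)$ emerges as $\mathrm{sgn}(l_0,\dots,l_d)$ and not its negative; nothing here is deep, but this is where all the care goes. Should one instead be willing to cite the converse half of Theorem \ref{th:QR} rather than re-derive it, this step collapses: the point is the normalised tuple of maximal minors of some full-rank matrix $M$, the $K$-block $M_K$ is invertible because $\Delta\neq0$, Cramer's rule applied to $M_Kx=(\text{$j$-th column of }M)$ exhibits the $(i,j)$ entry of $M_K^{-1}M$ as precisely $p_i(j)$, and then $\Phi$ applied to $M_K^{-1}M$ returns the tuple of its maximal minors, namely the original point. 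Collecting these pieces establishes the bijection together with the two explicit formulas asserted in the statement.
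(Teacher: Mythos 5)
Your proposal is correct, and there is nothing in the paper to compare it against: the paper states Proposition~\ref{prop:id} as a quoted classical result from Kleiman--Laksov \cite{Kleiman} and gives no proof of its own, so your argument is in effect a reconstruction of the cited source's proof rather than an alternative to anything in the paper. Your two mutually inverse maps $\Phi,\Psi$, with $\Psi\circ\Phi=\mathrm{id}$ read off from the identity block and $\Phi\circ\Psi=\mathrm{id}$ obtained by specialising \eqref{eq:qr} to the tuples $(a_1,\dots,a_d)$ and $(j,k_0,\dots,k_d)$ and iterating the resulting expansion $\Dr(a_1,\dots,a_d,j)=\sum_{l}p_l(j)\,\Dr(a_1,\dots,a_d,k_l)$, is exactly the standard route, and the sign bookkeeping you flag does close: the $\alpha=0$ term of \eqref{eq:qr} contributes $\Dr(a_1,\dots,a_d,j)\,\Dr(k_0,\dots,k_d)$, while for $\alpha=l+1$ the factor $(-1)^{l+1}$ combines with the $(-1)^{l}$ needed to move $j$ into the $l$-th slot to give coefficient $-p_l(j)\,\Dr(k_0,\dots,k_d)$, so after dividing by $\Dr(k_0,\dots,k_d)\neq 0$ the iterated expansion collapses, via the alternating values of $\Dr(k_{l_0},\dots,k_{l_d})$, onto the Leibniz formula for $\mathrm{Det}(p_i(j_\beta))$. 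Your Cramer's-rule shortcut is also valid but presupposes the converse half of Theorem~\ref{th:QR}, which is precisely what the self-contained version re-establishes on the chart $\Dr(k_0,\dots,k_d)\neq 0$, so the first route is the one to keep if you want the statement proved from the quadratic relations alone.
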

It follows from the proposition that the set of points of ${\bP}^{\Nr}$, whose co-ordinates satisfy the quadratic relations, is covered by $N+1$ copies of the affine $(d+1)(n-d)$ space; this is the \emph{Grassmann manifold} or the \emph{Grassmannian} denoted by ${\Gc}(d+1,n+1)$. In the next section we show that the above proposition is the key to the interpretation of some well-known constructions for constant dimension codes for random networks in terms of the Pl\"{u}cker coordinates of subspaces. Henceforth we term the matrix $ (p_i(j)) $ as the \emph{Pl\"{u}cker coordinate matrix} of the subspace whose representation as a point of ${\bP}^{\Nr}$ is given by: $ (\cdots, p_i (j), \cdots) $ in lexicographic order. 
\subsection{Schubert Cells and their Representation}\label{subs:Schubert}
Let ${\Gc}(d,n)$ be the Grassmannian of $d$-dimensional subspaces of an $n$-dimensional $K$-vector space. For $U \in {\Gc}(d,n)$, there exists a uniquely determined basis $\{ v_1, v_2,\cdots,v_d \},\, v_i \in K^{n}$, of the form:\\
$ v_1 = (\ast ,\cdots , \ast , 1, 0 ,\cdots , 0)$,\\ 
$v_2 =(\ast, \cdots , \ast , 0 , \ast , \cdots ,\ast, 1, 0, \cdots, 0)$,\\
$\vdots$\\
$v_d =( \ast ,\cdots ,\ast, \underbrace{0,\ast,\cdots,\ast,0,\ast,\cdots,\ast}_{(d-1) \, \,\text{zeros}},1,0,\cdots,0)$.\\
The subspace $U$ is uniquely determined by the placement of the terminal $1$'s in the $v_i$'s, indexed by a sequence of integers $1 \leq b_1 < b_2 < \cdots < b_d \leq n $, and the elements of $K$ which occupy the $\ast$ positions. All the subspaces which are parametrized by a particular tuple $\beta = (b_1,b_2,\cdots,b_d)$, denoting the positions of the trailing $1$'s in a basis of the above form, are said to constitute the \emph{Schubert Cell} $S_{\beta}$ of dimension:
\begin{equation*}
d_{\beta} = (b_1 -1)+ \cdots + (b_d - d) = \sum^{d}_{i=1}b_i -\binom{d+1}{2}.
\end{equation*}
An alternative description of Schubert cells involves subspaces whose basis vectors form a $d \times n$ matrix in row reduced echelon form, with the defining $d$-tuple indexing the positions of the leading $1$'s instead. The Schubert cell constituted of such subspaces, say, $S_{\alpha}$ with $\alpha = (a_1,a_2,\cdots,a_d)$ and the $ a_i $ satisfy $ 1 \leq a_1 < a_2 < \cdots < a_d \leq n$, has dimension:
\begin{eqnarray*} 
d_{\alpha} & = &(n-a_1-(d-1)) + \cdots + (n-a_d)\\
 & = & nd - \sum^{d}_{i=1}a_i -\binom{d}{2}.
\end{eqnarray*}
In both the cases, the maximum dimension $n(n-d)$ is achieved for the tuple $(1,\cdots,d)$ and the minimum dimension is $0$ for the tuple $(n-d+1,\cdots,n)$. The second characterization of Schubert cells is relevant to the construction of non-constant dimension subspace codes for random networks. 

\section{A Unified Framework for Lifting Constructions using Pl\"{u}cker Coordinates}\label{sec:Lift}
In this section, we reformulate two well-known methods for constructing constant dimensional subspace codes in terms of the Pl\"{u}cker coordinates of certain subspaces, providing a unified framework for both of them.
\subsection{Overview of the Lifting Constructions}
The \emph{Lifting Construction} was first proposed in \cite{KK} to yield constant dimensional Reed-Solomon-like codes for error and erasure correction in random networks. In a subsequent work \cite{SKK}, constant dimension subspace codes were constructed by `lifting' rank-metric codes as follows.
\begin{definition}
\cite{SKK} Let $X$ be an $l \times m$ matrix over ${\bF}_q$. We define the `lifting' map as follows:
\begin{equation*}
\Ir : {\bF}^{l \times m}_q \longrightarrow {\Gc}(l,l+m);\,\, X \longmapsto {\Ir}(X) = \langle [I \,\, X] \rangle
\end{equation*}
where $I$ denotes the $l\times l$ identity matrix and $\langle M \rangle$ denotes the row-space of the matrix $M$. The subspace ${\Ir}(X)$ is called a `lifting' of the matrix $X$.\\
If $\Cc$ is a rank-metric code (e.g. \cite{Gab}), the subspace code ${\Ir}(\Cc) = \{ {\Ir}(C)\, \lvert \,C \in \Cc \}$ is called the `lifting' of $\Cc$.
\end{definition}
A generalization of the lifting construction was achieved in \cite{ES} and \cite{KhK}, where matrices in the row reduced echelon form (RREF) have been used in conjunction with Ferrers diagram representation of rank-metric codes to construct the desired subspaces. We now briefly state this construction through some relevant definitions (\cite{ES}, \cite{KhK}).
\begin{definition}
A \emph{Ferrers diagram} \cite{LW} is a graphical representation of integer partitions as an array of dots such that the $ i $-th row has exactly the same number of dots as the $ i $-th element in the partition. An $ m \times n $ Ferrers diagram has $ m $ dots in the rightmost column and $ n $ dots in the topmost row.
\end{definition}
\begin{definition}
Let $\Fc$ be a $l \times m$ Ferrers diagram. A rank-metric code $\Cc$ is called a \emph{Ferrers diagram rank-metric code} (FDRM code) associated with $\Fc$ if all the codewords are $l\times m$ matrices in which the positions of the non-zero entries match those of the dots in $\Fc$. In this case, the codeword matrices are said to \emph{fit} the Ferrers diagram $\Fc$.
\end{definition}
\begin{definition}
Let $v \in \{0,1\}^{n+1}$ be a vector of weight $k$. The profile matrix of $v$, denoted ${\rm P}_v$, is a $k\times (n+1)$ matrix in RREF such that:
\begin{rlist}
\item The leading coefficients of the rows of ${\rm P}_v$ appear in the columns indexed by ${\rm{supp}}(v)$.
\item The entries in ${\rm P}_v$ other than $0$'s and $1$'s are dots.
\end{rlist}
Extracting the dots of  ${\rm P}_v$ forms a Ferrers diagram which is denoted by ${\Fc}(v)$. 
\end{definition}
\begin{definition}
Let $v \in \{0,1\}^{n+1}$ be a vector of weight $k$ and let $X \in {\bF}^{l \times m}_q ,\, l\leq k,\, m\leq n-k+1$, be a matrix which fits ${\Fc}(v)$. Then the (generalized) lifting of $X$ (by $v$) is defined as ${\Ir}_v (X) = \langle {\rm P}_v (X) \rangle \subseteq {\Gc}(l,l+m)$, where ${\rm P}_v (X)$ is the profile matrix of $v$ with the dots replaced by the entries of $X$.\\
As before, if $ \Cc $ is a rank-metric code, then ${\Ir}_v (\Cc) = \{ {\Ir}_v (C)\, \lvert \,C \in \Cc \}$ is called the generalized `lifting' of $\Cc$, which is an FDRM subspace code associated with ${\Fc}(v)$.
\end{definition}
\subsection{Unified Framework Using Pl\"{u}cker Coordinates}
To interpret the above constructions of constant dimensional subspace codes in terms of the Pl\"{u}cker coordinates of the associated subspaces, we prove the following proposition as a consequence of Theorem \ref{th:QR} and Proposition \ref{prop:id}. 
\begin{proposition}\label{prop:matrix}
If $X \in {\bF}^{(d+1) \times (n+1)}_q$ is in row-reduced echelon form (RREF), then the Pl\"{u}cker coordinate matrix associated with the subspace $ \langle X \rangle \subseteq {\Gc}(d+1,n+1)$ coincides with $ X $, upto multiplication by a unit.
\end{proposition}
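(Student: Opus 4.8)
The plan is to read off the pivot structure of the RREF matrix $X$ and feed it into Proposition~\ref{prop:id}. Index the rows of $X$ by $i=0,\ldots,d$ and its columns by $j=0,\ldots,n$, and let $k_0<k_1<\cdots<k_d$ be the pivot columns, i.e.\ the columns carrying the leading $1$'s. Since $\langle X\rangle$ has dimension $d+1$, the matrix $X$ has full row rank and hence no zero rows, so the $k_i$ are well defined; moreover, because an RREF matrix has $0$'s both above \emph{and} below each leading $1$, the column of $X$ indexed by $k_i$ is exactly the $i$-th standard basis vector $e_i\in{\bF}_q^{d+1}$. Equivalently, the $(d+1)\times(d+1)$ submatrix of $X$ supported on columns $k_0,\ldots,k_d$ is the identity.

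By the definition of the Pl\"ucker coordinates, $\Dr(j_0,\ldots,j_d)$ is the determinant of the $(d+1)\times(d+1)$ submatrix of $X$ formed by columns $j_0,\ldots,j_d$; in particular $\Dr(k_0,\ldots,k_d)=\det(I)=1\neq 0$, so the hypotheses of Proposition~\ref{prop:id} hold with the distinguished index set $(k_0,\ldots,k_d)$. That proposition then produces the unique Pl\"ucker coordinate matrix $(p_i(j))$ of $\langle X\rangle$ normalized so that its $(k_0,\ldots,k_d)$-submatrix is the identity, with entries
\[
p_i(j)=\Dr(k_0,\ldots,k_{i-1},j,k_{i+1},\ldots,k_d)\,/\,\Dr(k_0,\ldots,k_d)=\Dr(k_0,\ldots,k_{i-1},j,k_{i+1},\ldots,k_d).
\]

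It remains to evaluate this determinant. Writing $c_j$ for the $j$-th column of $X$, the matrix whose determinant is $\Dr(k_0,\ldots,k_{i-1},j,k_{i+1},\ldots,k_d)$ has columns $e_0,\ldots,e_{i-1},c_j,e_{i+1},\ldots,e_d$; that is, it is $I_{d+1}$ with its $i$-th column replaced by $c_j$. A cofactor expansion (Cramer's rule) shows its determinant equals the $i$-th entry of $c_j$, namely $X_{ij}$, with no sign correction. Hence $p_i(j)=X_{ij}$ for all $i,j$, so the Pl\"ucker coordinate matrix of $\langle X\rangle$ is literally $X$; the qualifier ``up to multiplication by a unit'' in the statement only reflects the projective scaling freedom of Pl\"ucker coordinates, which the RREF normalization has already removed.

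I do not anticipate a genuine obstacle: the argument is essentially bookkeeping around Proposition~\ref{prop:id}. The points that need care are the conventions --- that the $i$-th pivot column of an RREF matrix equals $e_i$ (which uses that RREF clears entries below as well as above each pivot) and that $X$ has no zero rows --- and the fact that the index sequence $(k_0,\ldots,k_{i-1},j,k_{i+1},\ldots,k_d)$ need not be increasing, so $\Dr(\cdot)$ here must be read as the signed determinant in that precise column order (as in Proposition~\ref{prop:id}), after which the cofactor expansion still returns $X_{ij}$ cleanly.
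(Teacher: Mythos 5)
Your proof is correct and takes essentially the same route as the paper: identify the pivot columns of the RREF matrix, observe that $\Dr(k_0,\ldots,k_d)=1\neq 0$, and invoke Proposition~\ref{prop:id} to obtain the Pl\"ucker coordinate matrix. The only difference is that you explicitly carry out the Cramer's-rule evaluation showing $p_i(j)=X_{ij}$ (and note the column-ordering convention), a verification the paper leaves implicit.
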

\begin{proof} 
Suppose the columns $ k_0, k_1, \cdots, k_d $ of $ X $ contain the leading elements of the rows. As $ \Dr(k_0,\cdots,k_d)\neq 0 $, by Proposition \ref{prop:id}, $ \langle X \rangle $ is identified with a point of $\,{\bP}^{\Nr}$, where $\Nr = \binom {n+1}{d+1} - 1$, corresponding to a $(d+1)\times(n+1)$ matrix with entries:
\begin{equation*}
 p_i(j) = \Dr(k_0,\cdots,k_{i-1},j,k_{i+1},\cdots,k_d)/ \Dr(k_0,\cdots,k_d), 
\end{equation*}
which is the associated Pl\"{u}cker coordinate matrix.
\end{proof}
Now we give the following interpretations of the lifting construction and its generalization in terms of the Pl\"{u}cker coordinates.\\ 
\emph{Lifting Construction}: In the setting of Proposition \ref{prop:matrix}, set $d = l-1, n=l+m-1$ and $ 0 \leq k_0 < \cdots < k_d \leq n $, with $\Dr(k_0,\cdots,k_d)\neq 0$. Let $X \in {\bF}^{l \times m}_q$, for instance, a codeword matrix of an $ l\times m $ rank-metric code over $ {\bF}_q $. Select a subspace of ${\Gc}(d+1,n+1)$ whose Pl\"{u}cker coordinates satisfy the following conditions: $k_i = i, i=0,\cdots,d$, i.e. we have $\Dr(0,\cdots,l-1) =1$ and all other coordinates are obtained from those of the form $\Dr(j_0,\cdots,j_d),\, 0 \leq j_{\beta} \leq n$, at most one of the $j_{\beta}\notin \{0,\cdots, l-1\}$, and that particular column is from the matrix $X$ in ascending order. This subspace is precisely the lifted subspace ${\Ir}(X) = \langle [I \,\, X] \rangle$.\\
\emph{Generalized Lifting Construction}: The generalized lifting construction may similarly be described as the selection of subspaces with particular constraints on their Pl\"{u}cker coordinates. Given a vector $v$ as described, set $k = d+1$ and let $k_i, \, i= 0,\cdots,d,\, 0 \leq k_i \leq n$, denote the positions of non-zero components of $v$ in ascending order. If $X \in {\bF}^{l \times m}_q$ be a matrix which fits ${\Fc}(v)$, where $ l\leq k,\, m\leq n-k+1$, (e.g. $ X $ is a codeword of a FDRM code associated with the Ferrers diagram ${\Fc}(v)$) construct the augmented matrix $A(X)\in {\bF}^{k \times n-k+1}_q$ in two steps as follows: 
\begin{arlist}
\item To each column of $X$ append $k-l$ zeros from below to form $X'\in {\bF}^{k \times m}_q $.
\item Add $\,n-k-m+1\,$ all-zero columns among the columns of $X'$ preserving the order of those columns in ${\rm P}_v$ to form $A(X)$.
\end{arlist}
Select a subspace of ${\Gc}(d+1,n+1)$ whose Pl\"{u}cker coordinates satisfy the following constraints: $\Dr(k_0,\cdots,k_d) = 1$ and the remaining coordinates are obtained from those of the form $\Dr(j_0,\cdots,j_d),\, 0 \leq j_{\beta} \leq n$, where at most one of the $j_{\beta}\notin \{k_0,\cdots,k_d\}$, and that particular column is from the augmented matrix $A(X)$ in ascending order. This subspace is the lifted subspace ${\Ir}_v (X) = \langle {\rm P}_v (X) \rangle$.\\
We now give a simple example illustrating the Pl\"{u}cker coordinate description of the generalized lifting construction.\\
\textbf{Example}
Let $v = 110101 \in \{0,1\}^6$, i.e. $n =5, k=4$. Then the profile matrix of $v$ and the associated Ferrers diagram are given by:
\begin{equation*}
{\rm P}_v =
\begin{bmatrix}
1&0 &\bullet &0 &\bullet &0 \\ 0&1&\bullet &0&\bullet &0\\0&0&0&1&\bullet &0\\ 0&0&0&0&0&1
\end{bmatrix}\,\,\,\,{\Fc}(v) =
\begin{matrix}
\bullet &\bullet\\ \bullet &\bullet\\ &\bullet
\end{matrix}
\end{equation*}
In this case, $l =3 < k =4$, but $m = 2 = n-k+1$. Now if $X \in {\bF}^{3 \times 2}_2$ is a matrix which `fits' ${\Fc}(v)$, then the lifted matrix is:
\begin{equation*}
{\Ir}_v (X) =
\begin{bmatrix}
1&0 &1 &0 &1 &0 \\ 0 &1 &1 &0 &1 &0\\0&0&0&1&1&0\\ 0&0&0&0&0&1
\end{bmatrix}
\end{equation*}
The positions of the non-zero entries in $v$ are indexed as: $k_0 =0, k_1 =1, k_2 =3, k_3 = 5$. Hence the matrix corresponding to the Pl\"{u}cker coordinate $\Dr(0,1,3,5)$ is the $4\times 4$ identity matrix and so, $\Dr(0,1,3,5) =1$. As $k-l =1$ and $m =n-k+1$, both the matrix $X'$ and the augmented matrix $A(X)$ are given by:
\begin{equation*}
X'= A(X) = \begin{bmatrix}
1&1\\1&1\\0&1 \\0 &0 
\end{bmatrix}
\end{equation*}
Consider the four coordinates: $\Dr(2,1,3,5), \Dr(0,2,3,5),$ $ \Dr(0,1,2,5), \Dr(0,1,3,2)$, involving the first column of $A(X)$ arising out of the column of ${\rm P}_v $ with index $2$. We have:
\begin{equation*}
{\rm{Det}} \begin{bmatrix}
1&0&0&0\\1&1&0&0\\0&0&1&0 \\0&0&0&1
\end{bmatrix}= 1,\, {\rm{Det}} \begin{bmatrix}
1&1&0&0\\0&1&0&0\\0&0&1&0\\0&0&0&1
\end{bmatrix}= 1
\end{equation*},
\begin{equation*}
{\rm{Det}} \begin{bmatrix}
1&0&1&0\\0&1&1&0\\0&0&0&0\\0&0&0&1
\end{bmatrix}= 0,\,{\rm{Det}} \begin{bmatrix}
1&0&0&1\\0&1&0&1\\0&0&1&0\\0&0&0&0
\end{bmatrix}= 0.
\end{equation*}
which is in accordance with Proposition \ref{prop:matrix}; and the Pl\"{u}cker coordinate matrix coincides with the lifted matrix ${\Ir}_v (X)$.\\
It follows that both the lifting construction and its generalization have precise interpretations in terms of the Pl\"{u}cker coordinate description of subspaces. In fact this interpretation provides a unified framework as the encoding procedure is shown to be equivalent to a choice of a set of column indices from the underlying (codeword) matrices and computing the Pl\"{u}cker coordinates for those particular indices. In the lifting construction the `pivotal' set of indices $\{k_0,\cdots,k_d\}$ was constrained to be just the set $\{0,\cdots, d\}$. Hence it actually limits the choice of subspaces to the so-called \emph{principal Schubert cell}, with a corresponding matrix representation having an identity matrix in the leftmost $d+1$ columns. In the generalized lifting construction, the $(d+1)\times(d+1)$ identity sub-matrix is interspersed with columns from the underlying (codeword) matrix; hence, other Schubert cells are also exploited. But, in both cases the lifted matrix still coincides with the $(d+1)\times(n+1)$ matrix formed by the `independent' Pl\"{u}cker coordinates in the setting of Theorem \ref{th:QR} and Proposition \ref{prop:id}, as has been outlined earlier.
\section{A Construction of Non-constant Dimensional Subspace Codes}\label{sec:Code}
The construction of non-constant dimensional subspace codes has been addressed in \cite{ES} and \cite{KhK}: in the former, by `puncturing' a constant dimension code that results from a multilevel code construction; in the latter, by partitioning the projective space followed by a two-level selection of subspaces. In this section we present a construction of non-constant dimension subspace codes which is an extension of the construction of orbit codes for random networks \cite{Orbit} for constant dimension codes. The Khaleghi-Kschischang construction \cite{KhK} also employs the so-called injection distance metric, rather than the subspace distance metric, as the former sometimes outperforms the latter in a worst-case adversarial model. So we first outline our construction for subspace distance and then discuss the modifications required with injection distance as the metric.
\section{Codes with Subspace Distance as Metric }\label{sec:d_S}
First we describe the selection of subspaces in terms of their Pl\"{u}cker coordinates. Recall from Proposition \ref{prop:id} that every $d$-space in the projective $n$-space has a non-zero Pl\"{u}cker coordinate $\Dr(k_0,\cdots,k_d)$ such that the $(d+1)\times (d+1)$ submatrix of the coordinate matrix formed by the columns indexed by $k_i,\, i= 0,\cdots,d$, is the identity. Let $U,\, V$ be two different subspaces, of possibly different dimensions $d_{u},\, d_{v}$, with the tuples $\{k_{u}\}$ and $\{l_{v}\}$ as the column indices of the identity submatrix in each case. Then the symmetric distance between the tuples $\{k_{u}\}$ and $\{l_{v}\}$ gives a lower bound on the subspace distance ${\dr}_s(U,V)$ as follows.
\begin{proposition}\label{prop:dsymdS}
Let $U,\, V$ be subspaces of ${{P}}^n $ of projective dimensions $s$ and $t$ respectively; and let $\{k_0, k_1,\cdots,k_{s}\}, \, 0 \leq k_{\alpha} \leq n$, and $\{l_0, l_1,\cdots,l_{t}\}, \, 0 \leq l_{\beta} \leq n$, be the tuples specifying the identity submatrices of the respective Pl\"{u}cker coordinate matrices. Then ${\dr}_s(U,V)\geq {\Delta} (\{k_{\alpha}\},\{l_{\beta}\} )$, where ${\Delta}(\{u\},\{v\})$ is the symmetric distance between the sets $\{u\},\, \{v\}$.
\end{proposition}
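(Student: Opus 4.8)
The plan is to strip the statement down to a single inequality about $\dim(U\cap V)$ and then read that inequality off the row-echelon structure supplied by Proposition \ref{prop:matrix}. Identify $U$ and $V$ with their underlying linear subspaces of $K^{n+1}$, of dimensions $s+1$ and $t+1$, and write $A=\{k_0,\dots,k_s\}$, $B=\{l_0,\dots,l_t\}$. By Proposition \ref{prop:matrix} the Pl\"ucker coordinate matrix of $U$ is, up to a unit, the RREF matrix of $U$, so $A$ is precisely the set of pivot columns of that RREF; in particular $|A|=s+1=\dim U$, and likewise $|B|=t+1=\dim V$. Since ${\dr}_s(U,V)=\dim U+\dim V-2\dim(U\cap V)$ and ${\Delta}(A,B)=|A|+|B|-2|A\cap B|$, the desired bound ${\dr}_s(U,V)\ge{\Delta}(A,B)$ is equivalent to the single inequality $\dim(U\cap V)\le|A\cap B|$.

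To obtain $\dim(U\cap V)\le|A\cap B|$ I would argue via leading positions. For a nonzero $x\in K^{n+1}$ let $\mathrm{LP}(x)$ be the index of its leftmost nonzero entry, and for a nonzero subspace $W$ put $\mathrm{LP}(W)=\{\mathrm{LP}(w):w\in W,\ w\neq 0\}$. The key elementary fact is that $\mathrm{LP}(W)$ is exactly the set of pivot columns of the RREF of $W$, so that $\dim W=|\mathrm{LP}(W)|$. Granting this, Proposition \ref{prop:matrix} gives $\mathrm{LP}(U)=A$ and $\mathrm{LP}(V)=B$. Now any nonzero $v\in U\cap V$ has a single leading position, which lies in $\mathrm{LP}(U)$ because $v\in U$ and in $\mathrm{LP}(V)$ because $v\in V$; hence $\mathrm{LP}(U\cap V)\subseteq\mathrm{LP}(U)\cap\mathrm{LP}(V)=A\cap B$, and therefore $\dim(U\cap V)=|\mathrm{LP}(U\cap V)|\le|A\cap B|$. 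Substituting back, ${\dr}_s(U,V)=|A|+|B|-2\dim(U\cap V)\ge|A|+|B|-2|A\cap B|={\Delta}(A,B)$, which is the assertion.

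The only point that deserves to be written out carefully — and which I would isolate as a small lemma — is the claim $\mathrm{LP}(W)=\{\text{pivot columns of }\mathrm{RREF}(W)\}$. One inclusion is immediate: the $\alpha$-th echelon row has its leading $1$ in the $\alpha$-th pivot column, so each pivot column occurs as a leading position. For the reverse inclusion, expand an arbitrary nonzero $v\in W$ as $v=\sum_\alpha\lambda_\alpha r_\alpha$ in the echelon basis $\{r_\alpha\}$, let $\alpha_0$ be minimal with $\lambda_{\alpha_0}\neq0$, and observe that the rows $r_\alpha$ with $\alpha\ge\alpha_0$ all vanish on the columns strictly to the left of the $\alpha_0$-th pivot, that $r_{\alpha_0}$ has a $1$ in the $\alpha_0$-th pivot column, and that every other $r_\alpha$ vanishes there; hence $\mathrm{LP}(v)$ is exactly that pivot column. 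Everything else is bookkeeping with the two dimension identities for ${\dr}_s$ and ${\Delta}$, so I expect no real obstacle beyond this lemma.
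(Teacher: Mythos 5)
Your proof is correct and follows essentially the same route as the paper: both reduce the claim to bounding $\mathrm{Dim}(U\cap V)$ by the number of common pivot positions of the two RREF (Pl\"{u}cker coordinate) matrices and then apply the dimension formula for ${\dr}_s$. The only difference is that you make explicit, via the leading-position lemma $\mathrm{LP}(W)=\{\text{pivot columns of }\mathrm{RREF}(W)\}$, the step that the paper asserts without proof ("hence we have ${\rm{Dim}}(U\cap V)\leq (1/2)(s+t-(a+b))$"), which is a welcome addition but not a different argument.
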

\begin{proof}
Without loss of generality assume that the Pl\"{u}cker coordinate matrices $M(U),\, M(V)$, of respective dimensions $(s+1)\times(n+1)$ and $(t+1)\times(n+1)$, are in RREF, keeping the columns of the identity submatrices intact. Now 
\begin{equation*}
{\Delta} (\{k_{\alpha}\},\{l_{\beta}\} ) = \lvert  \{k_{\alpha}\}\setminus\{l_{\beta}\}\rvert + \lvert \{l_{\beta}\}\setminus \{k_{\alpha}\}\rvert.
\end{equation*}
Let $a:=\lvert \{k_{\alpha}\}\setminus\{l_{\beta}\}\rvert,\, b:= \lvert \{l_{\beta}\}\setminus \{k_{\alpha}\}\rvert$. Then there are $s +1-a =t+1-b$ matching positions in the matrices $M(U),\, M(V)$ where there are leading 1's; hence we have ${\rm{Dim}}(U \cap V) \leq (1/2)(s+t -(a+b))$. But then 
\begin{eqnarray*}
{\dr}_s(U,V)& :=&{\rm{Dim}}\,U+{\rm{Dim}}\,V - 2{\rm{Dim}}(U \cap V)\\
            &\geq & s+t -2((1/2)(s+t -(a+b)))\\
            &= & a+b.
\end{eqnarray*}
\end{proof}
The above result is a general reformulation of Lemma $2$ in \cite{ES}, which relates the subspace distance between the subspaces of a projective space with the Hamming distance between their binary profile vectors. \\
Following \cite{KhK}, a two-step strategy for constructing non-constant dimension subspace codes over a projective space is outlined below:
\begin{arlist}
\item Selection of Schubert cells whose constituent subspaces across the cells are at a prescribed minimum subspace distance from each other;
\item Selection of subsets of subspaces in each Schubert cell which have the prescribed minimum subspace distance among them.
\end{arlist}
In \cite{KhK}, the selection of subsets within a Schubert cell is achieved via lifting of Ferrers diagram rank-metric (FDRM) codes, while a profile vector selection algorithm is implemented to choose appropriate Schubert cells which also support the largest lifted FDRM codes. As discussed in preceding sections, Schubert cells can be characterized by the tuples which index the positions of leading 1's of a matrix in RREF. Following Proposition \ref{prop:dsymdS}, the selection of a Schubert cell is translated into the selection of a tuple of column indices in the Pl\"{u}cker coordinate matrix of a representative subspace. For the selection of subspaces within a Schubert cell we propose a method exploiting the characteristic representation of subspaces in the Schubert cell. First we state some relevant definitions.
\begin{definition}
Let the set of basis vectors of any subspace of a Schubert cell $S_{\alpha}$ have a $d \times n$ matrix representation in RREF with the positions of the leading $1$'s indexed by the tuple $\{\alpha\} =\{a_1, a_2, \cdots, a_d\}$. The \emph{Schubert cell matrix} $A_{\alpha}$ corresponding to the Schubert cell $S_{\alpha}$ is the $d \times n$ matrix in RREF in which the columns indexed by the tuple $\{\alpha\}$ form a $d \times d$ identity matrix and all other columns are zero.
\end{definition} 
\begin{definition}
Given a Schubert cell $S_{\alpha}$ characterized by the tuple $\{\alpha \} =\{a_1, a_2, $ $\cdots,a_d\}$, $1\leq a_1 < \cdots < a_d \leq n$, the \emph{complementary Schubert cell} $S_{\alpha}^{c}$ is characterized by the tuple $ \{ \beta \} = \{ b_1,\cdots,b_{(n-d)}\} = \{1,2,\cdots,n\}\setminus \{a_1, a_2, \cdots, a_d\}$.
\end{definition}
Evidently the Schubert cell matrix for $S_{\alpha}^{c}$ is given by the $(n-d) \times n$ matrix $A_{\alpha}^{c}$ in RREF with the columns indexed by $\{\beta\}$ forming an $(n-d)\times (n-d)$ identity matrix and the remaining columns all zero.\\
Let $S_{\alpha}$ be a Schubert cell with the associated $d \times n$ Schubert cell matrix $A_{\alpha}$. Also denote the $d\times n$ RREF `matrix' representation of the $S_{\alpha}$ with asterisks indicating arbitrary field elements (vide Subsection \ref{subs:Schubert}) as $M(S_{\alpha}).$\\
Consider the set of $d \times (n-d)$ matrices such that the columns of these matrices correspond to those columns indexed by the set: $\beta = \{b_1, b_2, \cdots, b_{n-d}\}:= \{1,2,\cdots, n \} \setminus \{a_1, a_2, \cdots, a_d \}$ in ascending order, having non-zero elements only in the positions of the asterisks in $M(S_{\alpha})$. Select an additive group $\Gc$ of these matrices such that each element of the group has rank at least $r/2$.\\
For each $G \in \Gc$ construct the $d \times n$ matrix $C(G)$ as follows: 
\begin{equation*}
C(G) = A_{\alpha} + G \cdot A_{\alpha}^{c}.
\end{equation*}
The constant dimension subspace code constructed within the Schubert cell $S_{\alpha}$ is then given by: \begin{equation*}
{\Cc}_{\alpha} = \{\langle\,C(G)\,\rangle \lvert \, G \in \Gc \}.
\end{equation*}
The minimum subspace distance of the code satisfies ${\dr}_{min} \geq r$, as borne out in the following theorem.
\begin{theorem}
Let $G_1,\, G_2$ belong to an additive group of $d \times (n-d)$ matrices with minimum rank $r/2$. Then the subspaces spanned by the matrices $C(G_1)$ and $C(G_2)$ within the Schubert cell $S_{\alpha}$ are at a minimum subspace distance ${\dr}_{min} \geq r$, where $C(G_i), \, i= 1,2$, are defined as above.
\end{theorem}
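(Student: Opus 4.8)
The plan is to reduce the statement to a rank bound on the difference $G_1-G_2$ and then read off the intersection dimension directly from the echelon structure of the generating matrices. First I would record that, by construction, $C(G_i)=A_{\alpha}+G_i\cdot A_{\alpha}^{c}$ is a $d\times n$ matrix in RREF whose leading $1$'s sit exactly in the columns indexed by $\{\alpha\}$ (since $G_i\cdot A_{\alpha}^{c}$ vanishes on those columns), so the row spaces $U:=\langle C(G_1)\rangle$ and $V:=\langle C(G_2)\rangle$ each have dimension $d$ and
\begin{equation*}
{\dr}_s(U,V)={\rm{Dim}}\,U+{\rm{Dim}}\,V-2\,{\rm{Dim}}(U\cap V)=2\bigl(d-{\rm{Dim}}(U\cap V)\bigr).
\end{equation*}
Hence it suffices to prove ${\rm{Dim}}(U\cap V)\leq d-r/2$. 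Since $\Gc$ is an additive group, $G_1-G_2\in\Gc$, and assuming $G_1\neq G_2$ (the only case relevant to a minimum-distance bound) its rank is at least $r/2$.

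Next I would compute $U\cap V$ explicitly using the pivot structure. Because the pivots of $C(G_1)$ (and of $C(G_2)$) lie in the columns indexed by $\{\alpha\}$, every vector $x\in U$ is uniquely written as $x=y^{\top}C(G_1)$ with $y\in{\bF}^{d}_q$, and $y$ is recovered as the restriction of $x$ to the coordinates in $\{\alpha\}$; the same holds for $V$. Thus if $x\in U\cap V$, say $x=y^{\top}C(G_1)=z^{\top}C(G_2)$, then comparing the coordinates indexed by $\{\alpha\}$ forces $y=z$, and therefore
\begin{equation*}
y^{\top}\bigl(C(G_1)-C(G_2)\bigr)=y^{\top}(G_1-G_2)\,A_{\alpha}^{c}=0.
\end{equation*}
Since the nonzero columns of $A_{\alpha}^{c}$ form an $(n-d)\times(n-d)$ identity matrix, $A_{\alpha}^{c}$ has full row rank $n-d$, so the displayed equation is equivalent to $y^{\top}(G_1-G_2)=0$.

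Consequently the map $x\mapsto y$ identifies $U\cap V$ with the left kernel of $G_1-G_2$, whence ${\rm{Dim}}(U\cap V)=d-{\rm{rank}}(G_1-G_2)\leq d-r/2$. Substituting into the distance formula above yields ${\dr}_s(U,V)\geq 2\bigl(d-(d-r/2)\bigr)=r$, and taking the minimum over distinct pairs $G_1,G_2\in\Gc$ gives ${\dr}_{min}\geq r$. The only points needing care — and what I expect to be the main, though minor, obstacle — are the full-row-rank observation for $A_{\alpha}^{c}$ (equivalently, that right multiplication by $A_{\alpha}^{c}$ is injective on row vectors) and the implicit restriction to $G_1\neq G_2$. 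An alternative route that avoids computing the intersection is to stack the two generators, perform the elementary row reduction $C(G_2)\mapsto C(G_2)-C(G_1)=(G_2-G_1)A_{\alpha}^{c}$, and observe that the $d$ pivot columns of $C(G_1)$ (indexed by $\{\alpha\}$, with zero entries in the bottom block) together with ${\rm{rank}}(G_2-G_1)$ independent columns of $(G_2-G_1)A_{\alpha}^{c}$ are linearly independent in the stacked matrix; this gives ${\rm{Dim}}(U+V)\geq d+r/2$ and hence the same bound ${\dr}_s(U,V)=2\,{\rm{Dim}}(U+V)-2d\geq r$.
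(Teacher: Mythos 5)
Your proof is correct, and your primary route is genuinely different from the paper's. The paper works with the sum of the subspaces rather than their intersection: it writes ${\dr}_s(U,V)$ as twice the rank of the matrix obtained by stacking $C(G_1)$ on $C(G_2)$ minus $2d$, row-reduces the top block to $C(G_1)-C(G_2)=(G_1-G_2)A_{\alpha}^{c}$, and bounds the stacked rank below by $d+r/2$ because the rows of the difference vanish on the pivot columns where $C(G_2)$ carries the identity --- exactly the ``alternative route'' you sketch in your last sentences. Your main argument instead pins down the intersection: using the identity columns indexed by $\{\alpha\}$ to recover the coefficient vector, you identify $U\cap V$ with the left kernel of $G_1-G_2$, which yields the exact value ${\dr}_s(U,V)=2\,\mathrm{rank}(G_1-G_2)$ rather than only the inequality $\geq r$. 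The paper's computation is shorter and mirrors the standard argument for lifted rank-metric codes; yours is more informative (the distance distribution of ${\Cc}_{\alpha}$ is exactly twice the rank distribution of $\Gc$) and it makes explicit two points the paper leaves implicit: that right multiplication by $A_{\alpha}^{c}$ is injective on row vectors, so annihilating $(G_1-G_2)A_{\alpha}^{c}$ is the same as annihilating $G_1-G_2$, and that the hypothesis ``minimum rank $r/2$'' is of course meant for the nonzero elements of the additive group, i.e.\ for $G_1\neq G_2$.
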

\begin{proof} The $d \times n$ matrices $C_i = C(G_i), \, i= 1,2 $, have the matrices $G_1, \,G_2$, which belong to the additive group and have minimum rank $r/2$, as the $d\times (n-d)$ submatrices constituted of the columns indexed by $b_1, b_2, \cdots, b_{n-d}$. Then we have:
\begin{equation*}
{\dr}_{S}(C_1,\,C_2) = 2\,{\text{rank}}\,\begin{bmatrix} C_1\\C_2 \end{bmatrix} - 2d = 2\,{\text{rank}}\,\begin{bmatrix} C_1-C_2 \\C_2 \end{bmatrix} - 2d \geq r.
\end{equation*}
The last step follows from the fact that the non-zero columns of the difference matrix $C_1-C_2$ constitute an element of the additive group of matrices of minimum rank $r/2$. 
\end{proof}
Hence we can describe our two-step construction of non-constant dimension subspace codes with a minimum subspace distance ${\dr}_{min}$ as follows.
\begin{arlist}
\item{\emph{Selection of Schubert cells}: Select the largest subset $\Sc$ of Schubert cells subject to the following condition: For any $S_{\alpha}, S_{\beta} \in \Sc$, we have ${\Delta} (\{\alpha\},\{\beta\}) \geq {\dr}_{min}$;}
\item{\emph{Selection of subspaces in each Schubert cell}: The subspaces within a Schubert cell are chosen such that their spanning matrices satisfy the following condition: The submatrix indexed by the characteristic columns of the complementary Schubert cell matrix belongs to an additive group of minimum rank $\lfloor {\dr}_{min}/2 \rfloor$.} 
\end{arlist}
The non-constant dimension subspace code is, therefore, the union: $\bigcup {\Cc}_{\alpha}$, where the union is over all the Schubert cells $S_{\alpha} \in \Sc$.
\section{Codes with Injection Distance as Metric}\label{sec:d_I}
In this section we first introduce the notion of injection distance and discuss its applicability vis-$\grave{a}$-vis subspace distance. Next we modify the construction of non-constant dimension subspace codes presented earlier to adapt it for the injection distance metric.
\subsection{The Adversarial Channel Model and Injection Distance between Subspaces:}
Recall that in the linear multicast network coding model the received packets at a destination are expressed as the rows of a matrix:  $ Y = AX + DZ $, where the matrix $ A $ is the network transfer matrix seen by the source packets (rows of $ X $) and $ D $ is the transfer matrix seen by the error packets (rows of $ Z $). In the general random network coding problem, the matrix $ A $ is unknown to the receiver except for a lower bound on the rank of the matrix. In the worst-case adversarial model investigated in \cite{SK}, the following additional assumptions are made:
\begin{arlist}
\item The adversary knows the matrix $ X $ and can choose the transfer matrix $ A $ while respecting the rank constraint;
\item The matrix $ D $ is arbitrarily chosen by the adversary.
\end{arlist}
\begin{definition}\cite{SK}
The \emph{injection distance} between subspaces $ U, \, V $ of ${\bP}^n$ is given by:
\begin{eqnarray*}
{\dr}_I (U,V) := {\rm{max}}\{{\rm{Dim}}\,U,{\rm{Dim}}\,V \} - {\rm{Dim}}(U \cap V) \\
 = {\rm{Dim}}(U + V)- {\rm{min}}\{{\rm{Dim}}\,U,{\rm{Dim}}\,V \}.
 \end{eqnarray*}
\end{definition}
The injection distance between two subspaces is thus related to the subspace distance between them as follows (\cite{SK}):
\begin{equation}\label{eqn:dSdI}
{\dr}_I (U,V) = \frac{1}{2} {\dr}_s(U,V) + \frac{1}{2} \lvert {\rm{Dim}}\,U - {\rm{Dim}}\,V \rvert 
\end{equation}
It is evident that, upto a scaling factor, the injection distance coincides with the subspace distance for constant-dimension subspace codes. It was observed in \cite{SK} that the injection distance gives a more precise measure of the subspace code performance when a single error packet can cause simultaneous error and erasure. However, in a less pessimistic channel model than the worst-case adversarial channel, for instance, in the case where a single error packet can introduce either an error or an erasure but not both, the subspace distance is still appropriate.
\subsection{Construction of Non-constant Dimension Subspace Codes:}
First we prove a counterpart of Proposition \ref{prop:dsymdS}, with the same setting and notation, for the injection distance ${\dr}_I(U,V)$ between subspaces $U,\, V$ of ${\bP}^n$.
\begin{proposition}\label{prop:dsymdI}
Let $U,\, V$ be subspaces of ${\bP}^n$ of projective dimensions $s$ and $t$ respectively; and let $\{k_0, k_1,\cdots,k_{s}\}, \, 0 \leq k_{\alpha} \leq n$, and $\{l_0, l_1,\cdots,l_{t}\}, \, 0 \leq l_{\beta} \leq n$, be the tuples specifying the identity submatrices of the respective Pl\"{u}cker coordinate matrices. Then ${\dr}_I(U,V) \geq \lfloor \frac{1}{2} {\Delta} (\{k_{\alpha}\},\{l_{\beta}\}) \rfloor$, where ${\Delta}(\{u\},\{v\})$ is the symmetric distance between the sets $\{u\},\, \{v\}$.
\end{proposition}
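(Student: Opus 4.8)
The plan is to reduce the injection-distance bound to the subspace-distance bound already established in Proposition \ref{prop:dsymdS}, using the identity \eqref{eqn:dSdI} that relates the two metrics. Concretely, I would keep the notation of Proposition \ref{prop:dsymdS}: write $a := \lvert \{k_\alpha\}\setminus\{l_\beta\}\rvert$ and $b := \lvert \{l_\beta\}\setminus\{k_\alpha\}\rvert$, so that ${\Delta}(\{k_\alpha\},\{l_\beta\}) = a+b$. As in the proof of Proposition \ref{prop:dsymdS}, there are exactly $s+1-a = t+1-b$ matching pivot positions, which forces $s-t = b-a$, hence $\lvert {\rm{Dim}}\,U - {\rm{Dim}}\,V\rvert = \lvert s-t\rvert = \lvert a-b\rvert$.

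Next I would substitute into \eqref{eqn:dSdI}. Using the bound ${\dr}_s(U,V)\geq a+b$ from Proposition \ref{prop:dsymdS} together with $\lvert {\rm{Dim}}\,U - {\rm{Dim}}\,V\rvert = \lvert a-b\rvert$, I get
\begin{equation*}
{\dr}_I(U,V) = \tfrac{1}{2}{\dr}_s(U,V) + \tfrac{1}{2}\lvert {\rm{Dim}}\,U - {\rm{Dim}}\,V\rvert \geq \tfrac{1}{2}(a+b) + \tfrac{1}{2}\lvert a-b\rvert = \max\{a,b\}.
\end{equation*}
Finally I would observe that $\max\{a,b\} \geq \lceil (a+b)/2 \rceil \geq \lfloor (a+b)/2 \rfloor = \lfloor \tfrac{1}{2}{\Delta}(\{k_\alpha\},\{l_\beta\})\rfloor$, which gives the claimed inequality. (In fact this argument yields the slightly stronger ${\dr}_I(U,V)\geq \lceil \tfrac{1}{2}{\Delta}\rceil$, but the floor version suffices and matches the statement.)

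The only subtlety — the step I would treat most carefully rather than the "main obstacle" per se — is the claim that the number of matching pivots is the same whether counted from $U$'s side or $V$'s side, i.e.\ that $s+1-a = t+1-b$; this is exactly the bookkeeping already used in Proposition \ref{prop:dsymdS}, where the shared pivot columns contribute to ${\rm{Dim}}(U\cap V)$, and it is legitimate to reuse it verbatim here since the hypotheses and notation are identical. Everything else is the arithmetic identity $\tfrac{1}{2}(x+y)+\tfrac{1}{2}\lvert x-y\rvert = \max\{x,y\}$ applied to nonnegative integers, plus the elementary fact $\max\{a,b\}\geq \lfloor (a+b)/2\rfloor$. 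Since Proposition \ref{prop:dsymdS} does the geometric work, this proof is essentially a one-line corollary of it via \eqref{eqn:dSdI}.
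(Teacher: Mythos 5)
Your proposal is correct and follows essentially the same route as the paper's own (alternative) argument, namely combining Equation \eqref{eqn:dSdI} with the bound ${\dr}_s(U,V)\geq a+b$ from Proposition \ref{prop:dsymdS}; the paper also records a direct variant using $2\,\mathrm{max}\{\mathrm{Dim}\,U,\mathrm{Dim}\,V\}\geq \mathrm{Dim}\,U+\mathrm{Dim}\,V$, which amounts to the same computation. Your additional observation that $\lvert s-t\rvert=\lvert a-b\rvert$, giving ${\dr}_I(U,V)\geq \max\{a,b\}\geq\lceil \frac{1}{2}\Delta\rceil$, is a valid minor sharpening but does not change the substance of the proof.
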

\begin{proof}
From the proof of Proposition \ref{prop:dsymdS}, we obtain:\\
${\rm{Dim}}(U \cap V) \leq (1/2)(s+t -(a+b))$, where $a:=\lvert  \{k_{\alpha}\} \setminus\{l_{\beta}\}\rvert$ and $ b:= \lvert \{l_{\beta}\}\setminus \{k_{\alpha}\}\rvert$. 
So we have: 
\begin{eqnarray*}
2{\dr}_I(U,V)& := & 2{\rm{max}}\{{\rm{Dim}}\,U,{\rm{Dim}}\,V \} - 2{\rm{Dim}}(U \cap V)\\
             & \geq & {\rm{Dim}}\,U + {\rm{Dim}}\,V- 2{\rm{Dim}}(U \cap V)\\
             & \geq & a+b.
\end{eqnarray*}
As ${\Delta} (\{k_{\alpha}\},\{l_{\beta}\}) = a+b $, the proposition follows.\\
Alternatively, from Equation \ref{eqn:dSdI} and Proposition \ref{prop:dsymdS}, we have:
\begin{equation*}
2{\dr}_I(U,V) \geq  {\Delta} (\{k_{\alpha}\},\{l_{\beta}\}) + \lvert s-t \rvert 
\end{equation*} 
and the proposition follows.
\end{proof}

As suggested above, to capture the effect of the injection distance metric in our framework, we modify the definition of the symmetric distance between two sets.
\begin{definition}
For two finite sets $ S_1, S_2 $ of respective cardinalities $ s_1, s_2 $, the \emph{modified symmetric distance} is defined as follows:
\begin{equation*}
\Delta_m(S_1, S_2) =  \# \{ S_1 \setminus S_2 \} + \# \{ S_2 \setminus S_1 \} + \lvert s_1 -s_2 \rvert  
\end{equation*}
where $ \# S $ denotes the cardinality of the set $ S $ and `$ \lvert \, \rvert $' denotes the absolute value.
\end{definition}
\begin{proposition}\label{prop:modsym}
The modified symmetric distance is a metric.
\end{proposition}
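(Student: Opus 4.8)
The plan is to verify the three metric axioms directly from the definition $\Delta_m(S_1,S_2) = \#\{S_1\setminus S_2\} + \#\{S_2\setminus S_1\} + |s_1-s_2|$, observing that $\Delta_m = \Delta + |s_1 - s_2|$, where $\Delta$ is the ordinary symmetric-difference cardinality $\#(S_1\triangle S_2)$. Since $\Delta$ is already known to be a metric on finite sets (it is the Hamming metric on indicator vectors) and $|s_1-s_2|$ is the pullback of the standard metric on $\mathbb{Z}_{\geq 0}$ under the cardinality map $S\mapsto \#S$, the claim is essentially that a sum of two (pseudo)metrics is a metric provided at least one of them separates points — which $\Delta$ does.

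Concretely I would proceed as follows. First, \emph{non-negativity and symmetry}: both terms $\Delta(S_1,S_2)$ and $|s_1-s_2|$ are manifestly non-negative and symmetric in $S_1,S_2$, so $\Delta_m$ is too. Second, \emph{identity of indiscernibles}: if $\Delta_m(S_1,S_2)=0$ then both summands vanish; in particular $\#(S_1\triangle S_2)=0$ forces $S_1\setminus S_2 = S_2\setminus S_1 = \emptyset$, i.e. $S_1 = S_2$ (the cardinality condition $s_1=s_2$ is then automatic). Conversely $S_1=S_2$ gives $\Delta_m=0$ trivially. Third, \emph{triangle inequality}: for finite sets $S_1,S_2,S_3$ I would invoke the known triangle inequality for the symmetric difference, $\Delta(S_1,S_3)\leq \Delta(S_1,S_2)+\Delta(S_2,S_3)$, together with the triangle inequality for absolute value on the integers, $|s_1-s_3|\leq |s_1-s_2|+|s_2-s_3|$, and add the two inequalities termwise to obtain $\Delta_m(S_1,S_3)\leq \Delta_m(S_1,S_2)+\Delta_m(S_2,S_3)$.

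For self-containedness I would include a one-line justification of the symmetric-difference triangle inequality: $S_1\triangle S_3 \subseteq (S_1\triangle S_2)\cup(S_2\triangle S_3)$ (an element of $S_1\triangle S_3$ lies in exactly one of $S_1,S_3$, hence is moved relative to $S_2$ on at least one side), so taking cardinalities and using $\#(A\cup B)\leq \#A + \#B$ yields the bound.

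The only mild subtlety — and the closest thing to an obstacle — is making sure the $|s_1-s_2|$ term does not spoil the identity of indiscernibles in the wrong direction: one must note that $\Delta_m(S_1,S_2)=0$ already forces $S_1=S_2$ via the $\Delta$ term alone, so the extra term is harmless, and that $|s_1-s_2|=0$ alone would \emph{not} suffice (distinct equicardinal sets), which is precisely why we keep $\Delta$ in the sum rather than replacing it. Everything else is a routine termwise addition of two standard inequalities, so no real difficulty is expected.
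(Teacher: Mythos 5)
Your proof is correct and follows essentially the same route as the paper: decompose $\Delta_m = \Delta + |s_1 - s_2|$, use that $\Delta$ is a metric (which alone gives the separation property) and the triangle inequality for absolute values, then add the two inequalities termwise. Your additional one-line justification of the symmetric-difference triangle inequality just makes explicit what the paper cites as known.
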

\begin{proof}
It is evident that $ \Delta_m(S_1, S_2) = \Delta(S_1, S_2) + \lvert s_1 - s_2 \rvert $. Symmetry and the fact that $ \Delta_m (S_1, S_2) \geq 0$ with equality if and only if $ S_1 = S_2 $ follow immediately. Triangle inequality is satisfied because $ \Delta(\,) $ is a metric and for positive integers $ s_1, s_2, s_3 $ we have: $ \lvert s_1 - s_2 \rvert + \lvert s_1 - s_3 \rvert \geq \lvert s_2 - s_3 \rvert.$
\end{proof}
Now we reformulate the two-step construction of non-constant dimension subspace codes with injection distance as the metric. From Propositions \ref{prop:dsymdI} and \ref{prop:modsym}, the selection criterion of Schubert cells for the injection distance metric is readily obtained from the subspace distance case. Moreover, as the second step is essentially a construction of a constant-dimension subspace code within a Schubert cell, the transition from subspace distance to injection distance involves only a scaling factor. Hence we may state the construction of non-constant dimension subspace codes with a minimum injection distance $ \delta_{min} $ as follows. 
\begin{arlist}
\item{\emph{Selection of Schubert cells}: Select the largest subset $\Sc$ of Schubert cells subject to the following condition: For any $S_{\alpha}, S_{\beta} \in \Sc$, ${\Delta}_m (\{\alpha\},\{\beta\}) \geq 2{\delta}_{min}$;}
\item{\emph{Selection of subspaces in each Schubert cell}: The subspaces within a Schubert cell are chosen such that their spanning matrices satisfy the following condition: The submatrix indexed by the characteristic columns of the complementary Schubert cell matrix is an element of an additive group of minimum rank ${\delta}_{min}$.} 
\end{arlist}
The non-constant dimension subspace code is the union: $\bigcup {\Cc}_{\alpha}$, over all the Schubert cells $S_{\alpha} \in \Sc$, as defined earlier.
\section{Non-constant Dimension Ferrers-diagram Rank-metric Subspace Codes}\label{sec:ncfdrm} 
 In the proposed construction of non-constant dimension subspace codes, if the additive group $ \Gc $ of $d \times (n-d)$ matrices is chosen to be a rank-metric code \cite{Gab}, we have a so-called non-constant dimension Ferrers-diagram rank-metric code (FDRM) subspace code. We show that the Khaleghi-Kschischang construction \cite{KhK}, which has resulted in some of the best-known subspace codes in terms of code rate, is subsumed in our framework when the additive group $ \Gc $ is a rank-metric code.\\
In the proposed construction, the FDRM codeword matrices have non-zero elements in positions which match the asterisks in the `matrix' $M(S_{\alpha})$, where $ S_{\alpha} $ is the Schubert cell characterized by the tuple $   \{\alpha \} =\{a_1, a_2, $ $\cdots,a_d\} $. First we prove a lower bound on the dimension of such an FDRM code, similar to Theorem $ 3 $ in \cite{KhK}, in this general setting.
\begin{theorem}\label{th:FDRM}
The dimension $ \kappa $ of the largest FDRM code whose codewords have non-zero elements precisely in the positions of the asterisks in $ M(S_{\alpha}) $ satisfies: $ \kappa \geq D_{\alpha} - {\rm{max}}\{ d, n-d \}(\delta_r -1) $, where $ \delta_r $ is the minimum rank distance of the rank-metric code and $ D_{\alpha} $, the number of asterisks in $ M(S_{\alpha}) $, is the dimension of the Schubert cell $ S_{\alpha} $.
\end{theorem}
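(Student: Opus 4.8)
The plan is to obtain the desired code as the restriction of a Gabidulin maximum-rank-distance (MRD) code to the asterisk positions of $M(S_\alpha)$, exactly as in the Etzion--Silberstein construction of Ferrers-diagram rank-metric codes and in Theorem~$3$ of \cite{KhK}, and then to bound its dimension by counting crudely how many dots of the associated Ferrers diagram can be destroyed when one deletes complete rows or complete columns.

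First I would make the combinatorial picture explicit. Let $a_1<a_2<\cdots<a_d$ be the pivot columns of $M(S_\alpha)$ and $b_1<b_2<\cdots<b_{n-d}$ the complementary (non-pivot) ones. Since $M(S_\alpha)$ is in RREF, row $i$ carries an asterisk in column $j$ precisely when $j$ is a non-pivot column with $j>a_i$; because the $a_i$ increase, the asterisk columns of row $i$ form a shrinking right-hand segment of $\{b_1,\dots,b_{n-d}\}$. Hence, exactly as for the profile matrices ${\rm P}_v$ recalled in Section~\ref{sec:Lift}, the dots of $M(S_\alpha)$ form a genuine Ferrers diagram $\Fc(S_\alpha)$ sitting inside a $d\times(n-d)$ box, whose total number of dots is $D_\alpha=nd-\sum_{i=1}^d a_i-\binom d2$, the dimension of $S_\alpha$ recorded in Subsection~\ref{subs:Schubert}. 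I would also isolate the two elementary size bounds that drive the estimate: row $i$ of $\Fc(S_\alpha)$ has $n-a_i-(d-i)\le n-d$ dots, so no row has more than $n-d$ dots, and trivially no column has more than $d$ dots.

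Next I would invoke the standard construction: intersect a linear MRD (Gabidulin) code in $\bF_q^{d\times(n-d)}$ of minimum rank distance $\delta_r$ with the space of matrices supported on $\Fc(S_\alpha)$. This intersection is an FDRM code of minimum rank distance at least $\delta_r$, and the Etzion--Silberstein / Khaleghi--Kschischang dimension estimate gives $\kappa\ge\min_{0\le i\le\delta_r-1}\nu_i$, where $\nu_i$ is the number of dots of $\Fc(S_\alpha)$ that survive after deleting the extreme $\delta_r-1-i$ rows and the extreme $i$ columns prescribed by that construction. By the two size bounds above, deleting $\delta_r-1-i$ rows removes at most $(\delta_r-1-i)(n-d)$ dots and deleting $i$ columns removes at most $i\,d$ dots, so $\nu_i\ge D_\alpha-(\delta_r-1-i)(n-d)-i\,d$. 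The subtracted term is affine in $i$, hence maximal at $i=0$ or $i=\delta_r-1$, with values $(\delta_r-1)(n-d)$ and $(\delta_r-1)d$; taking the larger shows $\nu_i\ge D_\alpha-(\delta_r-1)\max\{d,n-d\}$ for every admissible $i$, and the theorem follows.

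The MRD restriction and the linear-algebraic justification of the $\min_i\nu_i$ estimate would be imported essentially verbatim from \cite{ES} and \cite{KhK}, and I do not expect them to cause trouble in this more general setting. The one step that genuinely uses the structure of a Schubert-cell matrix, and the one I would treat most carefully, is the first: checking that the asterisks of $M(S_\alpha)$ really do form a Ferrers diagram with bounding box $d\times(n-d)$ and with the per-row and per-column dot bounds $n-d$ and $d$. Pinning down the bounding box correctly is precisely what yields the factor $\max\{d,n-d\}$ rather than, say, $\max\{d,n\}$; everything after that is the affine minimisation above.
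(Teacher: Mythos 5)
Your overall route --- realizing the code as the restriction of an MRD code to the asterisk positions of $M(S_{\alpha})$ --- is the same as the paper's, and your combinatorial preliminaries are fine (row $i$ of the asterisk pattern has $n-a_i-(d-i)\leq n-d$ dots, so the pattern is a Ferrers diagram inside a $d\times(n-d)$ box of total size $D_{\alpha}$). The genuine gap is the step ``$\kappa\geq\min_{0\leq i\leq\delta_r-1}\nu_i$.'' In \cite{ES} the quantity $\min_i\nu_i$ is an \emph{upper} bound on the dimension of any Ferrers-diagram rank-metric code of minimum rank distance $\delta_r$ (their Theorem 1); its attainability is stated there only as a conjecture, and Theorem 3 of \cite{KhK} proves precisely the weaker bound of the form $D_{\alpha}-\max\{\cdot\}(\delta_r-1)$ --- essentially the statement you are trying to establish --- not the $\min_i\nu_i$ lower bound. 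Moreover, for the code you actually construct the claim is false in general: if the asterisk pattern is a full $d\times 2$ block sitting inside a much wider $d\times(n-d)$ box and $\delta_r=2$, then $\min_i\nu_i=d$, while the intersection of a Gabidulin code (codimension $n-d$ in this case) with the $2d$-dimensional support space need not exceed the generic value $\max\{0,\,2d-(n-d)\}$, which can be $0$. So the inequality you import is both misattributed and unavailable for the intersection code; as written the proof does not go through, even though your final affine estimate happens to land on the correct bound.

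The repair is short and is exactly the paper's argument: let $W\subseteq V={\bF}_q^{\,d\times(n-d)}$ be the $D_{\alpha}$-dimensional space of matrices supported on the asterisks and $\Cc$ a linear MRD code with minimum rank distance $\delta_r$. The desired code is $\Cc\cap W$, i.e.\ the kernel of the map $v\mapsto(\pi(v),\phi(v))$, where $\pi:V\rightarrow V/\Cc$ and $\phi$ projects onto the non-asterisk coordinates; rank--nullity gives
\begin{equation*}
\kappa\;\geq\;\dim\Cc+D_{\alpha}-d(n-d),
\end{equation*}
and substituting Gabidulin's formula $\dim\Cc=\max\{d,n-d\}(\min\{d,n-d\}-\delta_r+1)$ together with $\max\{d,n-d\}\cdot\min\{d,n-d\}=d(n-d)$ yields $\kappa\geq D_{\alpha}-\max\{d,n-d\}(\delta_r-1)$ directly, with no need for the row/column deletion counts or the minimisation over $i$.
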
 
\begin{proof}
It is known \cite{Gab} that the dimension of a linear maximum rank-distance (MRD) code $ \Cc $, whose codewords are  $d \times (n-d)$ matrices over $ {\bF}_q $, is given by:
\begin{equation*}
 D = {\rm{max}}\{ d, n-d \}({\rm{min}}\{ d, n-d \} - {\delta}_r + 1) 
 \end{equation*}
where $ {\delta}_r $ is the minimum rank distance of $ \Cc $.\\
Define $ V := {{\bF}_q}^{d \times (n-d)} $. The kernel of the surjective map: $ {\pi} : V \rightarrow  V/ {\Cc} $ is a subspace of dimension $ d(n-d) - D $. The largest FDRM code which matches $ M(S_{\alpha}) $ is the largest subcode $ {\Cc}' \subseteq \Cc $ subject to the constraint that all its codeword matrices have zeros in $ d(n-d) - D_{\alpha} $ fixed positions. Invoking the standard vector-space isomorphism between $ {{\bF}_q}^{m \times n} $ and $ {{\bF}_q}^{mn} $ where $ m,n $ are positive integers, any codeword $ c \in {\Cc}' $ may be expressed as a $ d(n-d) $-dimensional vector over $ {\bF}_q $ with non-zero elements in $ D_{\alpha} $ coordinates $ i_1,\cdots,i_{D_{\alpha}} $. Let $\phi: V \rightarrow {{\bF}_q}^{d(n-d) - D_{\alpha}}$ be the surjective map which sets zero the coordinates $ i_1,\cdots,i_{D_{\alpha}} $. Now consider the linear map: 
\begin{equation*}
\Phi: V \longrightarrow V/{\Cc} \times {{{\bF}_q}^{d(n-d) - D_{\alpha}}};\,\, v \longmapsto (\pi(v), \phi (v)). 
\end{equation*}
Evidently, $ {\rm{Dim}}\, \Phi(V) \leq 2d(n-d) - D - D_{\alpha} $. The largest FDRM code is the kernel of the map $ \Phi $, and hence, by the rank-nullity theorem, has dimension:
\[
\begin{array}{l}
 \kappa \geq d(n-d) - (2d(n-d) - D - D_{\alpha}) \\
= D_{\alpha} + {\rm{max}} \lbrace d, n-d \rbrace ({\rm{min}} \lbrace d, n-d \rbrace - \delta_r + 1) - d(n-d)\\ 
= D_{\alpha} -  {\rm{max}}\lbrace d, n-d \rbrace(\delta_r -1).
\end{array}
\]     
\end{proof}
\subsection{Selection of Schubert Cells:} It follows from Theorem \ref{th:FDRM} that the size of the FDRM non-constant dimension subspace codes depends directly on the dimension of the chosen Schubert cells. In \cite{KhK} the relation between the so-called profile vectors associated with each Schubert cell and the dimension of the FDRM codes constructed has been used in an algorithm for selection of the Schubert cells. We present a more direct approach, basing our choice on the dimension of each Schubert cell, as follows.\\
Recall from Subsection \ref{subs:Schubert} that the dimension of a Schubert cell $ S_{\alpha} $ in the second representation characterized by the tuple $ \{\alpha \}$ is given by: $ D_{\alpha} = nd(\alpha) - \sum^{d(\alpha)}_{i=1}a_i -\binom{d(\alpha)}{2} $, where $ d(\alpha) := \lvert \{ \alpha \} \rvert = d $ is the length of the tuple. Let $ \Ac $ denote the set of all such distinct tuples $ \alpha $ in lexicographic order. For each $ \alpha \in \Ac $ define the choice function:
\begin{equation}\label{eq:choice}
F(S_{\alpha}) = D_{\alpha} - {\rm{max}}\{ d(\alpha), n-d(\alpha) \}(\delta_r -1). 
\end{equation}
The selection of Schubert cells proceeds via a greedy algorithm, similar to that in \cite{KhK}, which selects at each stage the tuple $\{ \alpha \}$ with the largest $ F(S_{\alpha}) $ and discards all the tuples $ \beta \in \Ac $ such that $ {\Delta} (\{\alpha\},\{\beta\}) < {\dr}_{min} $ for the subspace distance metric or $ \lfloor \frac{1}{2}{\Delta}(\{\alpha\},\{\beta\})\rfloor  + \lfloor \frac{1}{2} \lvert \# \{\alpha\}- \# \{\beta\} \rvert\rfloor < {\delta}_{min} $ for the injection distance metric. The next stage proceeds by applying the same choice function to the modified set $ \Ac := \Ac \setminus \Bc $, where $ \Bc $ is the set of the discarded tuples $ \{ \beta \} $ at the previous stage. The selection algorithm terminates when $ \Ac := \emptyset $. The choice function in Equation \ref{eq:choice}, which is based on the lower bound on the dimension of the FDRM subspace code given in Theorem \ref{th:FDRM}, ensures that at every stage a Schubert cell which supports the largest FDRM code is chosen among the contending cells. The largest number of subspaces belong to the principal Schubert cell of the Grassmannian ${\Gc}(d,n)$, with $ d = \lfloor n/2 \rfloor  $ or $ d = \lceil n/2 \rceil $; for these values of $ d $, the negative term in the choice function is also minimum for a given $ \delta_r $. Hence the algorithm proceeds by choosing either of these subspaces (different when $ n $ odd) in the first step.
\subsection{Comparison with the Khaleghi-Kschischang algorithm:} The above algorithm and that stated in \cite{KhK} differ in two aspects, one of which can be explained as a general reformulation and the other, a slight modification.
\begin{rlist}
\item \emph{Binary Profile Vectors vs. Characteristic Tuples:} In \cite{KhK}, the selection of Schubert cells is based on the so-called binary profile vectors associated with each cell. The Ferrers diagram representation of each profile vector then gives the dimension and the positions of non-zero entries of the appropriate rank-metric code. In our construction we simply use the characteristic tuple representation of each Schubert cell and the rank-metric code is likewise specified by the so-called Schubert cell ``matrix'' (with the asterisks in the ``basis  vectors'' replacing the dots in the Ferrers diagrams) - a direct formulation.
\item \emph{The Choice Function for the Greedy Algorithm:} The following scoring function is used to select a new profile vector $ v $ in \cite{KhK}:
\begin{equation}\label{eq:scorefk}
{\rm{score}}(v, \delta_r) = \displaystyle{\sum_{i=1}^{n} \sum_{j =1}^{i} \bar{v_i}v_j - {\rm{max}}\{wt(v), \eta(v)\}(\delta_r -1)},
\end{equation}
where $ \eta(v) = n - (wt(v) + {\rm{min}}_{t \in s(v)} t) +1 $, $ wt(v) $ the (Hamming) weight of the binary vector $ v \in \{ 0, 1 \} ^n $, $ \bar{v} $ the complement of $ v $ and $ s(v) $ the support of $ v $, i.e. the set whose elements denote the positions of the 1's in $ v $. It is evident that the sum term in the above scoring function computes $ D_{\alpha} $, the number of asterisks in the `matrix' representation of the chosen Schubert cell in our framework, which, in our choice function (Equation \ref{eq:choice}), is the exact formula based on the characteristic tuple. Rewriting the second term in Equation \ref{eq:choice} as $ {\rm{max}}\{ d(\alpha), n - (d(\alpha)+ {\rm{min}}_{i} \{ a_i \}) + 1 \}(\delta_r -1) $ makes it equivalent to the second term of Equation \ref{eq:scorefk}. It modifies the parameter $ n-d(\alpha) $ by subtracting the number of leftmost all-zero columns in $ M(S_{\alpha}) $, which is not necessary for making a choice in our framework.
\end{rlist}
\subsection{Numerical Results:} We compute lower bounds on the rates of codes constructed for a given minimum subspace distance (using symmetric distance) and for a given minimum injection distance (using the modified symmetric distance), from the values of the choice function (Equation \ref{eq:choice}) for each selected Schubert cell. For a selected set of Schubert cells $ \Sc $, a lower bound on the code rate of a non-constant dimension FDRM subspace code over a finite field $ {\bF}_q $ is given by the formula: $  \log_q (\sum_{S_{\alpha} \in \Sc } q^{\Fc(S_{\alpha})}) $ where\\
\[
\Fc(S_{\alpha}) =  
\begin{cases} 
  F(S_{\alpha}), & \text{if}\,\, F(S_{\alpha}) \geq 0\\
   0 & \text{otherwise}. 
\end{cases}
\] 
 
It is seen that the calculated lower bound is not simply a computation of the choice function in each case and then summing the values in the exponent. The justification for this modification is as follows: The choice function $ F(S_{\alpha}) $ for a Schubert cell $ S_{\alpha} \in \Sc $ is zero or negative when it does not support a rank-metric code of the given dimensions with the prescribed minimum rank-distance. However, the particular Schubert cell still maintains the inter-cell minimum subspace or injection distance and so contributes at least one subspace codeword to the entire code. To give a simple illustration, consider the following construction of a non-constant dimension FDRM subspace code in $ {\bP}^4 ({\bF}_2) $ with minimum subspace distance $ 4 $. A set $ \Sc $ for this code consists of Schubert cells specified by the tuples $ \{1, 2 \}, \{ 3, 4 \}, \{ 1, 2, 3, 4, 5 \} $, with corresponding choice function values of $ 3, -1, -5 $, respectively. For the second tuple $ M(S_{\alpha}) $ has only $ 2 $ asterisks while the last one has none, being the $ 5 \times 5 $ identity matrix, obviously not enough to support the required rank-metric code in each case. However, they can still contribute a single subspace codeword each.\\
In Table $ 1 $ we have given a few lower bounds on the code rates in our construction with respect to given minimum subspace distance (column LB($ {\dr}_s $)) and minimum injection distance (column LB($ {\dr}_I $)) for some small field sizes $ q $ and projective dimensions $ N-1 $. It is seen that they are comparable with those of the actual codes constructed in \cite{KhK}, which are also presented in the columns ` KK($ {\dr}_I $)' (minimum injection distance) and `KK($ {\dr}_s $)' (minimum subspace distance). In fact, for subspace distance the rates of the constructed codes actually coincide with the computed lower bound in half of the cases, while the other values show only marginal improvement. The difference between the rates of the constructed codes and the computed lower bounds is more pronounced for injection distance. 

\begin{table}
\begin{tabular}{|l|l|l|l|l|l|l|l|l|}
\hline
$q$ & $ {\dr}_I $ & ${\dr}_s $  & $ N $ & LB($ {\dr}_s $)  & KK($ {\dr}_s $) &  LB($ {\dr}_I $) & KK($ {\dr}_I $)  ) \\
\hline
$ 2 $ & $ 2 $ & $ 4 $ & $ 9 $ & $15.1515   $ & $ 15.1518 $ & $ 15.3238 $ & $ 15.6245 $\\
$ 2 $ & $ 2 $ & $ 4 $ & $ 10 $ & $ 20.1534 $ & $ 20.1534 $ & $ 20.1967 $ & $20.3294  $\\
$ 2 $ & $ 2 $ & $ 4 $ & $ 12 $ & $ 30.1556 $ & $ 30.1557 $ & $ 30.1998 $ & $ 30.3346 $\\
$ 2 $ & $ 3 $ & $ 6 $ & $ 13 $ & $28.0030  $ & $ 28.0030 $ & $ 28.0134 $ & $ 28.0263 $\\
$ 3 $ & $ 2 $ & $ 4 $ & $ 7 $ & $ 8.0131 $ & $  8.0145 $ & $ 8.0464 $ & $ 8.1331 $\\
$ 3 $ & $ 2 $ & $ 4 $ & $ 8 $ & $ 12.0135 $ & $ 12.0135 $ & $ 12.0160 $ & $  12.0311$\\
$ 4 $ & $ 2 $ & $ 4 $ & $ 7 $ & $ 8.0030 $ & $  8.0032 $ & $ 8.0142 $ & $8.0522 $\\
$ 4 $ & $ 2 $ & $ 4 $ & $ 8 $ & $  12.0031 $ & $ 12.0031 $ & $ 12.0034 $ & $ 12.0068$\\
\hline
\end{tabular}
\vspace{1.5 mm}
\caption{Comparison of lower bounds on the rates of our FDRM subspace codes with the actual rates of codes in \cite{KhK}}
\end{table}
\section{Conclusion}
We have established a unified framework for the construction of constant-dimension random network codes in terms of the classical description of the subspaces of a projective space using the Pl\"{u}cker coordinates. Further we have given a general construction of non-constant dimension subspace codes, which exploits a basic representation of Schubert systems, for both the subspace distance and the injection distance metrics. We have demonstrated that our framework subsumes the construction of \cite{KhK}, which is corroborated by the proximity of the estimated lower bounds for our construction with the actual code rates given for small parameters in \cite{KhK}. Some questions regarding the above constructions remain open: 
\begin{rlist}
\item For the non-constant dimension FDRM subspace codes, will an exhaustive search algorithm yield a significantly better code-rate than the lower bound based on a greedy search? Is there a systematic search procedure to achieve the best possible code-rate?
\item Is it possible to obtain a set (or better still, a group) of subspaces within a selected Schubert cell larger than the largest Ferrers-diagram rank-metric code, which satisfies the minimum subspace/injection distance condition? This problem is a constant-dimension subspace code construction problem with the added constraint of the Schubert cell structure. 
\end{rlist} 
So far there have been several approaches to the decoding problem of constant dimension codes in general (\cite{SKK}), of spread codes in particular, (\cite{GMR}) and employing list decoding methods (\cite{List1}, \cite{List2}, \cite{List3}). Decoding of `punctured' non-constant dimension codes is discussed in \cite{ES}. A possible decoding strategy of our non-constant dimension subspace codes and those in \cite{KhK} involves two steps: 
\begin{arlist}
\item Estimating the Schubert cell to which the transmitted subspace belongs. In \cite{KhK}, so-called binary asymmetric codes with minimum asymmetric distance greater than or equal to the minimum injection distance and binary codes with minimum Hamming distance greater than or equal to the minimum subspace distance have been used to estimate the profile vector corresponding to the Schubert cell containing the transmitted subspace. In our construction, a similar procedure, identifying each characteristic tuple with a codeword of a binary asymmetric code or that of a binary code with the same relations on the injection and subspace distances, would suffice for estimating the tuple corresponding to the Schubert cell. 
\item Estimating the transmitted subset within the Schubert cell, which is the decoding problem for rank-metric constant-dimension subspace codes.
\end{arlist}  
Hence another fruitful direction of research will be to devise an efficient general decoding algorithm for non-constant dimension projective space codes.
\section*{Acknowledgment}
The author is grateful to Dilip P. Patil, Sumanta Mukherjee and Smarajit Das for their assistance.

\end{document}